\DeclareMathAlphabet{\mathpzc}{OT1}{pzc}{m}{it}
\newtheorem{theorem}{Theorem}
\newtheorem{remark}{Remark}
\newtheorem{definition}{Definition}
\newtheorem{lemma}{Lemma}
\newtheorem{assumption}{Assumption}\newenvironment{proof}[1][Proof:]{\begin{trivlist}
\item[\hskip \labelsep {\it #1}]}{\end{trivlist}}
\newcommand{\rref}[1]{(\ref{#1})}
\begin{document}
\begin{frontmatter}

\title{On Output Feedback Stabilization of Time-Varying Decomposable Systems with Switching Topology and Delay$^\star$}\thanks{This paper was presented at the 21st IFAC World Congress; July 12-17; 2020; Berlin, Germany} 

\author[First]{Muhammad Zakwan} 
\author[Second]{Saeed Ahmed} 
\author[Second]{Naim Bajcinca} 

\address[First]{Department of Electrical and Electronics Engineering, Bilkent University, 06800 Ankara, Turkey (e-mail: zakwan@ ee.bilkent.edu.tr)}
\address[Second]{Department of Mechanical and Process Engineering, University of Kaiserslautern, 67663 Kaiserslautern, Germany\\ (e-mails: saeed.ahmed@mv.uni-kl.de, naim.bajcinca@mv.uni-kl.de)}

\begin{abstract} 
This paper presents a new method for dynamic output feedback stabilizing controller design for decomposable systems with switching topology and delay. Our approach consists of two steps. In the first step, we model the decomposable systems with switching topology as equivalent LPV systems with a piecewise constant parameter. In the second step, we design stabilizing output feedbacks for these LPV systems in the presence of a time-varying output delay using a  trajectory-based stability analysis approach. We do not impose any constraint on the delay derivative.  Finally, we illustrate our approach by applying it to the consensus problem of non-holonomic agents.
\end{abstract}

\begin{keyword}
Output feedback control, delay, distributed systems, switching topology, parameter-varying systems.
\end{keyword}

\end{frontmatter}

\section{Introduction}
A distributed system is a swarm of subsystems that are connected physically or through communication protocols with each subsystem having information about the interconnection topology.  Such systems emerge in many application domains such as vehicle platooning  (\citealp{jovanovic2005ill}), multi-UAV formation flight (\citealp{BeVePrTa}),  satellite formation  (\citealp{MeHa01,Ca00}), paper machine problem (\citealp{StGoDu03}), and large segmented telescopes (\citealp{JiVoHo06}). Motivated by these real-world applications, many researchers have studied various problems related to distributed systems such as consensus problem, flocking problem, and formation problem; see \cite{li2014cooperative}. The decomposable system (or identical dynamically decoupled
system), on the other hand, is a special class of distributed systems with identical subsystems interacting with each other. Tools from the algebraic graph
theory, such as Laplacian matrices, or graph-adjacency matrices (known as pattern matrices) are used to represent
the interactions among the subsystems of a decomposable system; see \cite{BoKe06}. A general overview of these systems, and their applications are provided in \cite{massioni2009distributed}, \cite{ghadami2013decomposition}, and \cite{eichler2014robust}. 

Since time delay may affect the performance of a distributed network in practice due to non-ideal signal transmission, the study of distributed systems with a delay is strongly motivated.  Therefore, many efforts have been made in the literature to tackle the issues of stability and performance degradation caused by communication or network delay in distributed systems;  see \cite{atay2013consensus}, \cite{ghaedsharaf2016interplay}, \cite{OlMu04}, \cite{papachristodoulou2010effects},  \cite{qiao2016consensus},  \cite{seuret2008consensus}, and \cite{SuWa09}. Apart from time delays, another interesting phenomenon in distributed systems is switching topology, where the interconnection
links may change over time due to various reasons. For example, communicating mobile agents may lose an existing connection due to the presence of an obstacle. On the other hand, a new connection may be established between the agents when they come close to each other in an effective range of detection.

In this paper, we provide a new method for dynamic output feedback stabilization of time-varying decomposable systems with switching topology and delay. Our technique involves two steps. First, we  model a decomposable system with switching topology as an equivalent LPV system with a piecewise constant parameter. Then, we use a trajectory-based approach to design output feedback controllers, ensuring the stability of this class of LPV systems in the presence of a time-varying pointwise output delay. Both of these steps are important for their own sakes and can be considered as two seperate contributions of this paper.  Motivated by the serious obstacle presented by the search for suitable Lyapunov functionals for switched and LPV systems with delay, we employ a trajectory-based stability result, proposed in  \cite{ahmed2018dynamic}.  We allow the delay to be a piecewise continuous function of time, and we do not impose any constraint on the delay derivative, which makes it possible to apply our approach to systems where the delay cannot be approximated by a differentiable delay with a bounded first derivative. Typical examples of this phenomenon include data flow across a communication network and delay resulting from sampling. While \cite{ahmed2018dynamic} considers switched systems with countable modes, here we extend their results  to switched systems with an uncountable number of modes. Stability analysis and control of LPV systems with piecewise constant parameters is also presented in \cite{briat2015stability}. However, there are two key differences between \cite{briat2015stability} and the present work, (i)  no delay is present in \cite{briat2015stability}, (ii) we study output feedback control, whereas state feedback control is discussed in \cite{briat2015stability}. Our work can be regarded as an extension of \cite{zakwan2019distributed}, offering new advantages, because (i) we use a 
trajectory-based approach for stability analysis which circumvents
the serious obstacle presented by the search for appropriate Lyapunov functionals, (ii) we do not impose any constraint
on the upper bound of the delay derivative.

The paper unfolds as follows. Section~2 presents modeling of decomposable systems with switching topology as equivalent LPV systems with a piecewise constant parameter. The output feedback stabilizing controller design appears in Section~3. The application of our results to multi-agent nonholonomic systems is presented in Section~4, and Section~5  presents concluding remarks and some future perspectives.

The notation will be simplified whenever no confusion can arise from the context. The identity matrix of appropriate dimension and the Kronecker product are denoted by $I$  and $\otimes$, respectively. The set of real numbers and the set of nonnegative real numbers are denoted by $\mathbb{R}$ and $\mathbb{R}_{\geq 0}$, respectively. The set of positive integers and the set of whole numbers are denoted by $\mathbb{N}$ and $\mathbb{N}_0 := \mathbb{N} \cup \{0\}$, respectively.  The usual Euclidean norm of vectors, and the induced norm of matrices, are denoted by $|\cdot|$.
Given any constant $\tau > 0$, we let $C([- \tau, 0], \mathbb{R}^n)$ denote the set of all continuous
$\mathbb{R}^n$-valued functions that are defined on $[- \tau, 0]$. We abbreviate this set as $C_{\rm in}$, and
call it the set of all {\em initial functions}. Also, for any continuous function
$x : [- \tau, \infty) \rightarrow \mathbb{R}^n$ and all $t \geq 0$, we define $x_t$ by
$x_t(\theta) = x(t + \theta)$ for all $\theta \in [- \tau, 0]$, i.e., $x_t \in C_{\rm in}$ is
the translation operator. A vector or
a matrix is nonnegative (resp. positive) if all of its entries are nonnegative (resp. positive).
We write $M \succ 0$ (resp. $M \preceq 0$) to indicate that $M$ is a symmetric positive definite (resp. negative
semi-definite) matrix. For two vectors $V = (v_1 ... v_n)^\top$ and $U = (u_1 ... u_n)^\top$, we write $V \leq U$ to
indicate that for all $i \in \{1,..., n\}$, $v_i \leq u_i$.

\section{LPV Modeling of Decomposable Systems with Switching Topology}
Let us consider an $Nn$-th order interconnected linear time-varying system  
\begin{equation}
\begin{array}{rcl}
\label{equ:systemdef}
\dot{x}(t)&=&{A(t)}x(t)+ {B(t)}u(t)\\ [1mm]
y(t)&=&{C(t)}x(t-\tau(t))
\end{array}
\end{equation}
with  $x\in \mathbb{R}^{Nn}$, $u\in\mathbb{R}^{Nd_u}$, $ y \in \mathbb{R}^{Nd_y}$, $x_t\in C_{\rm in}$, and for all $t \geq 0$, $\tau(t) \in [0, \overline{\tau}]$
with $\overline{\tau} > 0$.

We introduce a range dwell-time condition, i.e., a sequence of real numbers $t_k$ such that there are two positive
constants $\underline{\delta}$ and $\overline{\delta}$ such that $t_0 = 0$ and for all $k \in \mathbb{N}_0$,
\begin{equation}
\label{eq:dist}
t_{k + 1} - t_k \in [\underline{\delta}, \overline{\delta}] \; .
\end{equation} 

We start by formally defining decomposable matrices, which are of interest in describing the systems considered in this paper.
\begin{definition} [\citealp{eichler2014robust}] \label{def:1}
 A  matrix ${M(t)}:\mathbb{R}\rightarrow\mathbb{R}^{Np\times Nq}$ is called \emph{decomposable} if given a matrix $\mathcal{P}(t):\mathbb{R}\rightarrow\mathbb{R}^{N\times N}$ (pattern matrix), there exist matrices $\bar{M}^a$, $\bar{M}^b \in \mathbb{R}^{p\times q}$ such that
\begin{equation} 
\label{eq:def1}
M (t) = I_N\otimes \bar{M}^a + \mathcal{P}(t
)\otimes \bar{M}^b
  \end{equation}
for all $t \ge 0$, where the superscript $a$ represents the \emph{decentralized} part and superscript $b$ represents the \emph{interconnected} part.    $\hfill\square$
\end{definition}
We can now define the class of systems studied in this paper.
\begin{definition}  [\citealp{eichler2014robust}]
The system (\ref{equ:systemdef}) is called \emph{decomposable} if and only if all of its system matrices are decomposable, i.e.,  can
be written as \rref{eq:def1} for the same  matrix  $\mathcal{P}(t)$. $\hfill\square$
\end{definition}

We define the pattern matrix $\mathcal{P}(t)$ as a linear convex combination of two symmetric commutable matrices $\mathcal{P}_1$ and $\mathcal{P}_2$, i.e.,
\begin{equation}
\label{pattern}
\begin{array}{rcl}
\mathcal{P}(t) = \sigma(t) \mathcal{P}_1 + (1- \sigma(t)) \mathcal{P}_2\; ,
\end{array}
\end{equation} 
where $\sigma(t) \in [0,1]$ is a piecewise constant switching signal satisfying the range dwell-time condition \rref{eq:dist}. Moreover, $\dot{\sigma}(t) = 0$ between the jumps and $\sigma(t)$ arbitrarily change its value with a finite jump intensity. Since the symmetric matrices $\mathcal{P}_1$ and $\mathcal{P}_2$ commute with each other, there exists a unitary matrix $U$ that simultaneously diagonalizes the pair $\mathcal{P}_1, \mathcal{P}_2$ according to \cite[Theorem 1.3.12]{horn2012matrix}.   Therefore, we can write \rref{pattern} as
\begin{equation*}
\begin{array}{rcl}
\Lambda(t) = \sigma(t) \Lambda_1 + (1- \sigma(t))\Lambda_2\; ,
\end{array}
\end{equation*}
where $\Lambda(t):\mathbb{R} \mapsto \mathbb{R}^N$ is a matrix-valued function,  $\Lambda_1 \ \text{and}\ \Lambda_2$ are constant block diagonal matrices each of size $N$, and let   $\lambda_{1i}$ and $\lambda_{2i}$ denote the $i$-th eigenvalues of the matrices $\mathcal{P}_1$ and $\mathcal{P}_2$, respectively.

\begin{remark}
The convex combination given in \rref{pattern} is not unique, any linear convex combination is admissible, e.g., 
\begin{equation*}
\begin{array}{rcl}
\mathcal{P}(t) =  {\mathcal{P}_1 + \mathcal{P}_2 +\sigma(t)(\mathcal{P}_2 - \mathcal{P}_1)\over 2 }\; ,
\end{array}
\end{equation*}
where $\sigma(t) \in  [-1,1]$. $\hfill\square$
\end{remark} 

We provide a theorem that will be substantial in proving the results in the sequel.
\begin{theorem}
\label{thm:Theorem 1}
An $Nn$-$th$ order system \rref{equ:systemdef} as described in definition 2 is equivalent to $N$ independent subsystems of order $n$  
\begin{equation}
\begin{array}{rcl}
\label{equ:26}
\dot{\hat{x}}_i(t)&=&A^\dagger(\nu_i)\hat{x}_i(t)+B^\dagger(\nu_i) \hat{u}_i(t) \\ [1mm]
\hat{y}_i(t)&= &C^\dagger(\nu_i) \hat{x}_i(t-\tau(t)) \quad \text{for} \ i = 1,2,...,N\; ,
\end{array}
\end{equation}
where $\hat{x}_i \in \mathbb{R}^{n}$, $\hat{u}_i \in \mathbb{R}^{d_u}$, $\hat{y}_i \in \mathbb{R}^{d_y}$, and  $\nu_i(t) = \sigma(t)\lambda_{1i} + (1 - \sigma(t))\lambda_{2i}$. Moreover, the matrices $A^\dagger(\nu_i)$, $B^\dagger(\nu_i)$, and $C^\dagger(\nu_i)$ are defined as 
\begin{equation}
\label{equ:split}
\begin{array}{rcl}
A^\dagger(\nu_i) &= & {\bar{A}}^a + \nu_i {\bar{A}}^b  \\[1mm]
B^\dagger(\nu_i) &= & {\bar{B}}^a + \nu_i {\bar{B}}^b \\[1mm]
C^\dagger(\nu_i) &=& {\bar{C}}^a + \nu_i {\bar{C}}^b\; .
\end{array}  
\end{equation}
\end{theorem}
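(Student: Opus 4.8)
The plan is to exhibit an explicit unitary change of coordinates that simultaneously block-diagonalizes all three system matrices, and then read off the $N$ decoupled subsystems. First I would invoke \cite[Theorem 1.3.12]{horn2012matrix}, already cited above, to fix a unitary matrix $U \in \mathbb{R}^{N \times N}$ with $U^\top \mathcal{P}_1 U = \Lambda_1$ and $U^\top \mathcal{P}_2 U = \Lambda_2$, both diagonal; consequently $U^\top \mathcal{P}(t) U = \sigma(t)\Lambda_1 + (1-\sigma(t))\Lambda_2 = \Lambda(t)$ is diagonal with $i$-th diagonal entry exactly $\nu_i(t) = \sigma(t)\lambda_{1i} + (1-\sigma(t))\lambda_{2i}$. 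I would then introduce the transformed signals $\hat{x}(t) = (U^\top \otimes I_n)x(t)$, $\hat{u}(t) = (U^\top \otimes I_{d_u})u(t)$, and $\hat{y}(t) = (U^\top \otimes I_{d_y})y(t)$, noting that each of these transformations is invertible (indeed orthogonal), so the transformed system is equivalent to \rref{equ:systemdef} in the strict sense of having trajectories in bijective norm-equivalent correspondence.

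The core computation is to push the coordinate change through \rref{equ:systemdef} using the mixed-product rule $(P\otimes Q)(R\otimes S) = (PR)\otimes(QS)$ together with $U^\top U = I_N$. For the state matrix this gives
\begin{equation*}
(U^\top \otimes I_n)A(t)(U \otimes I_n) = (U^\top U)\otimes \bar{A}^a + (U^\top \mathcal{P}(t) U)\otimes \bar{A}^b = I_N \otimes \bar{A}^a + \Lambda(t)\otimes \bar{A}^b,
\end{equation*}
which, since $\Lambda(t)$ is diagonal, is block diagonal with $i$-th block $\bar{A}^a + \nu_i(t)\bar{A}^b = A^\dagger(\nu_i)$. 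The same manipulation with $\bar{B}^a,\bar{B}^b$ and the identity $I_{d_u}$ yields a block-diagonal input matrix with blocks $B^\dagger(\nu_i)$, and with $\bar{C}^a,\bar{C}^b$ and $I_{d_y}$ a block-diagonal output matrix with blocks $C^\dagger(\nu_i)$. For the output equation one inserts $I = (U\otimes I_n)(U^\top\otimes I_n)$ between $C(t)$ and $x(t-\tau(t))$ so that $\hat{y}(t) = \big[(U^\top\otimes I_{d_y})C(t)(U\otimes I_n)\big]\hat{x}(t-\tau(t))$; the bracketed matrix is the block-diagonal one just obtained, and the scalar delay $\tau(t)$ is untouched by the (time-independent, state-component-wise) coordinate change, hence is common to all blocks.

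Collecting the block rows, the transformed dynamics read $\dot{\hat{x}}_i(t) = A^\dagger(\nu_i)\hat{x}_i(t) + B^\dagger(\nu_i)\hat{u}_i(t)$ and $\hat{y}_i(t) = C^\dagger(\nu_i)\hat{x}_i(t-\tau(t))$ for $i = 1,\dots,N$, with no coupling across $i$ because every transformed matrix is block diagonal; this is precisely \rref{equ:26}--\rref{equ:split}, establishing the equivalence. I expect the main obstacle to be purely bookkeeping rather than conceptual: one must carry the three different Kronecker identity blocks $I_n$, $I_{d_u}$, $I_{d_y}$ consistently through every product, and one must verify that the same unitary $U$ works for all three matrices, which is exactly what the decomposability hypothesis (same pattern matrix $\mathcal{P}(t)$ in Definition~2) guarantees. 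A minor point worth stating explicitly is that, because $\sigma(t)$ is piecewise constant and satisfies the range dwell-time condition \rref{eq:dist}, each $\nu_i(t)$ inherits the same structure, so the resulting subsystems are genuine LPV systems with a piecewise constant parameter as claimed in the modeling discussion.
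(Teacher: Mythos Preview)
Your proposal is correct and follows essentially the same route as the paper: a coordinate change $x\mapsto (U^\top\otimes I_n)x$ (and likewise for $u,y$) using the simultaneously diagonalizing unitary $U$, then the mixed-product rule to block-diagonalize all three system matrices and read off the decoupled subsystems. The only cosmetic difference is that the paper packages the Kronecker computation $(U\otimes I_p)^{-1}M(t)(U\otimes I_q)=I_N\otimes\bar{M}^a+\Lambda(t)\otimes\bar{M}^b$ into a separate Lemma~1 and then invokes it, whereas you carry out that computation inline; your version is in fact slightly more careful in explicitly tracking the delay through the output equation.
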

\begin{remark}
Interconnected systems in which each subsystem has the same delay, appear in many real-world applications; see  \cite{zhou2014consensus} for the motivation of this assumption. $\hfill\square$
\end{remark}

Let us define the set
\begin{equation*}
\mathscr{P} = 
\left\{
\begin{array}{ccl}
\rho : \mathbb{R}_{\ge 0} \rightarrow P : \rho(t) = p_k \in P, \\[1mm]
t \in [t_k, \ t_{k + 1}), \ k \in \mathbb{N}_0
\end{array}
\right\}\;,
\end{equation*} where $P$ is compact and connected.

We now provide a method to model a decomposable system with switching topology as an LPV system with a piecewise constant parameter.

\begin{theorem}
\label{thm:dectolpv}
The system \rref{equ:systemdef} as described in definition 2 is equivalent to an LPV system  
\begin{equation}
\begin{array}{rcl}
\label{equ:30}
\dot{\omega}(t)&=&\mathcal{A}(\rho)\omega(t)+\mathcal{B}(\rho) v(t) \\ [1mm]
r(t)&= &\mathcal{C}(\rho) \omega(t-\tau(t))\;,
\end{array}
\end{equation}
where $\omega \in \mathbb{R}^{n}$, $v \in \mathbb{R}^{d_u}$, $r \in \mathbb{R}^{d_y}$, and the piecewise constant parameter $\rho\in \mathscr{P}$ satisfies the dwell-time condition \rref{eq:dist} and takes arbitrary values in the interval $[\min\{\underline{\lambda}_1,\underline{\lambda}_2\},\ \max\{\bar{\lambda}_1, \bar{\lambda}_2 \}]$, where $\underline{\lambda}_j,\ \bar{\lambda}_j$ are minimum and maximum eigenvalues of $\mathcal{P}_j$ for $j = 1,2$, respectively. Moreover, the matrices $\mathcal{A}(\rho), \mathcal{B}(\rho),  \text{and}\ \mathcal{C}(\rho)$ are given by
\begin{equation*}
\begin{array}{rcl}
\mathcal{A}(\rho) &= & {\bar{A}}^a + \rho {\bar{A}}^b  \\[1mm]
\mathcal{B}(\rho) &= & {\bar{B}}^a + \rho {\bar{B}}^b \\[1mm]
\mathcal{C}(\rho) &=& {\bar{C}}^a +  \rho{\bar{C}}^b\; .
\end{array}  
\end{equation*}
\end{theorem}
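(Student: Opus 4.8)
The plan is to build directly on Theorem~\ref{thm:Theorem 1}. First I would invoke that theorem to replace the $Nn$-th order decomposable system \rref{equ:systemdef} by the $N$ decoupled $n$-th order subsystems \rref{equ:26}, each carrying its own scalar parameter $\nu_i(t) = \sigma(t)\lambda_{1i} + (1 - \sigma(t))\lambda_{2i}$ together with the affinely parametrized matrices $A^\dagger(\nu_i)$, $B^\dagger(\nu_i)$, $C^\dagger(\nu_i)$ of \rref{equ:split}. The observation that drives the argument is that all of these subsystems have exactly the same functional form; they differ only through the value of the scalar $\nu_i$.

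Next I would analyze the range and regularity of each $\nu_i(\cdot)$. Since $\sigma(t) \in [0,1]$, the value $\nu_i(t)$ is a convex combination of $\lambda_{1i}$ and $\lambda_{2i}$, hence $\nu_i(t) \in [\min\{\lambda_{1i},\lambda_{2i}\},\, \max\{\lambda_{1i},\lambda_{2i}\}]$. Because $\lambda_{ji}$ is an eigenvalue of the symmetric matrix $\mathcal{P}_j$, it lies in $[\underline{\lambda}_j,\bar{\lambda}_j]$, so for every $i$ we obtain $\nu_i(t) \in [\min\{\underline{\lambda}_1,\underline{\lambda}_2\},\, \max\{\bar{\lambda}_1,\bar{\lambda}_2\}]$. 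Moreover, $\nu_i$ inherits the switching structure of $\sigma$: it is piecewise constant, its jumps occur only at the instants $t_k$, and $\dot{\nu}_i = 0$ between the jumps, so it satisfies the range dwell-time condition \rref{eq:dist}; equivalently, $\nu_i \in \mathscr{P}$ with the compact connected set $P = [\min\{\underline{\lambda}_1,\underline{\lambda}_2\},\, \max\{\bar{\lambda}_1,\bar{\lambda}_2\}]$.

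Then I would introduce the generic piecewise constant parameter $\rho \in \mathscr{P}$ ranging over that same interval and set $\mathcal{A}(\rho) = \bar{A}^a + \rho\bar{A}^b$, $\mathcal{B}(\rho) = \bar{B}^a + \rho\bar{B}^b$, $\mathcal{C}(\rho) = \bar{C}^a + \rho\bar{C}^b$, so that \rref{equ:30} is obtained from \rref{equ:26} by substituting $\rho$ for $\nu_i$ and renaming $(\hat{x}_i,\hat{u}_i,\hat{y}_i)$ as $(\omega,v,r)$. Each of the $N$ subsystems in \rref{equ:26} is recovered as the trajectory of \rref{equ:30} corresponding to the admissible choice $\rho = \nu_i$, and, conversely, every trajectory of \rref{equ:30} has this generic affine form. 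Composing this correspondence with the equivalence already established in Theorem~\ref{thm:Theorem 1} yields the claimed equivalence between \rref{equ:systemdef} and \rref{equ:30}.

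The main obstacle, and the point I would be most careful about, is the precise sense of ``equivalence'': the LPV parameter $\rho$ ranges over the whole continuum $[\min\{\underline{\lambda}_1,\underline{\lambda}_2\},\, \max\{\bar{\lambda}_1,\bar{\lambda}_2\}]$, whereas the subsystems realize only the finite collection of parameter signals $\{\nu_1,\dots,\nu_N\}$. Hence I would argue that, for the stabilization problem treated in the next section, it is the family \rref{equ:30} over all admissible $\rho \in \mathscr{P}$ that matters: a controller rendering \rref{equ:30} stable for every such $\rho$ in particular stabilizes each $\nu_i$-subsystem, and therefore, via Theorem~\ref{thm:Theorem 1}, the original decomposable system. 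I would also make explicit that the dwell-time bounds $\underline{\delta},\overline{\delta}$ carry over unchanged from $\sigma$ to $\rho$, so no temporal freedom is lost or gained, and note that the single common delay $\tau(t)$ is untouched by the simultaneous diagonalization precisely because $C(t)$ is decomposable with respect to the same pattern matrix $\mathcal{P}(t)$.
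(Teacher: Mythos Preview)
Your proposal is correct and follows essentially the same route as the paper: invoke Theorem~\ref{thm:Theorem 1} to obtain the $N$ decoupled subsystems \rref{equ:26}, exploit the affine dependence of the matrices on $\nu_i$, embed all the $\nu_i(t)$ in the common interval $[\min\{\underline{\lambda}_1,\underline{\lambda}_2\},\max\{\bar{\lambda}_1,\bar{\lambda}_2\}]$, and then replace $\nu_i$ by a generic $\rho\in\mathscr{P}$. If anything, you are more explicit than the paper about the inheritance of the dwell-time structure and about the one-sidedness of the ``equivalence'' (i.e., that \rref{equ:30} over all admissible $\rho$ is a covering family whose stabilization implies that of each $\nu_i$-subsystem and hence of \rref{equ:systemdef}).
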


\section{Output Feedback Controller Design}
In this section, we present output feedback stabilizing controller design for the LPV system with a piecewise constant parameter given in \rref{equ:30}.

We start by introducing an assumption.
\begin{assumption}
\label{assump:1}
(i) There exist a matrix $\mathcal{K}(\rho)$  for all $\rho \in \mathscr{P}$ and constants $T \geq \bar{\tau}$, $a \in [0, 1)$, and $b \geq 0$, 
such that the solutions of the system
\begin{equation*}
\label{aux_sys_1}
\dot{\alpha}(t) = \mathcal{M}(\rho) \alpha(t) + \zeta(t)
\end{equation*}
with $\mathcal{M}(\rho) = \mathcal{A}(\rho) + \mathcal{B}(\rho) \mathcal{K}(\rho)$
and $\zeta$ being a piecewise continuous function, satisfy 
\begin{equation*}
\label{eq:assump1_1}
|\alpha(t)| \leq a |\alpha(t - T)| + b \displaystyle\sup_{\ell \in [t - T, t]} |\zeta(\ell)|
\end{equation*}
for all $t \geq T$. \\
(ii) There exist a matrix  $\mathcal{L}(\rho)$ for all $\rho \in \mathscr{P}$ and constants $T \geq \bar{\tau}$,  $c \in [0, 1)$,
and $d \geq 0$, such that the solutions of the system 
\begin{equation*}
\label{aux_sys_2}
\dot{\beta}(t) = \mathcal{N}(\rho) \beta(t) + \eta(t)
\end{equation*}
with
$\mathcal{N}(\rho) = \mathcal{A}(\rho) + \mathcal{L}(\rho) \mathcal{C}(\rho)$
and $\eta$ being a piecewise continuous function, satisfy 
\begin{equation*}
\label{eq:assump1_2}
|\beta(t)| \leq c |\beta(t - T)| + d \displaystyle\sup_{\ell \in [t - T, t]} |\eta(\ell)| 
\end{equation*}
for all $t \geq T$. $\hfill\square$
\end{assumption}

\begin{remark}
Assumption~1 pertains to the stabilizability and the detectability of the system \rref{equ:30}. See Appendix~\ref{ason} below on a method to check Assumption~1. $\hfill\square$
\end{remark}
Let 
\begin{equation}
\begin{array}{l}
\label{eq:theK}
s_1 \triangleq \sup_{\rho \in \mathscr{P}} |\mathcal{B}(\rho) \mathcal{K}(\rho)| \\[1mm]
 s_2 \triangleq \sup_{\rho \in \mathscr{P}} |\mathcal{L}(\rho) \mathcal{C}(\rho)| \\[1mm]
 s_3 \triangleq  \sup_{\rho \in \mathscr{P}} |\mathcal{M}(\rho)|\;,
\end{array}
\end{equation}
then we have the following result:
\begin{theorem}
\label{thm:theorem1}
Let the system (\ref{equ:30}) satisfy Assumption \ref{assump:1}. If for all $t \geq 0$, 
\begin{equation*}
\tau(t) \leq \bar\tau <\bar{\tau}_u,
\end{equation*} 
where
\begin{equation*}
\label{eq:final_result}
\bar{\tau}_u= \frac{(1 - a)(1 - c)}{d s_1 s_2((1 - a) + b s_3)} \;,
\end{equation*}
then the origin of the feedback system 
\begin{equation}
\label{eq:thm1_2}
\left\{
\begin{array}{ccl}
\dot{\omega}(t) & = & \mathcal{A}(\rho) \omega(t) + \mathcal{B}(\rho) \mathcal{K}(\rho) \hat{\omega}(t)
\\[1mm]
\dot{\hat{\omega}}(t) & = & \mathcal{A}(\rho) \hat{\omega}(t) + \mathcal{B}(\rho) \mathcal{K}(\rho) \hat{\omega}(t)
\\[1mm]
& & + \mathcal{L}(\rho) [\mathcal{C}(\rho) \hat{\omega}(t) - r(t)]
\\[1.5mm]
r(t) &=& \mathcal{C}(\rho)\omega(t - \tau(t))
\end{array}
\right.
\end{equation}
is globally uniformly exponentially stable (GUES) for all $\rho  \in  \mathscr{P}$.
\end{theorem}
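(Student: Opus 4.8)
The plan is to analyze the error dynamics of the observer-based feedback system by introducing the observation error, and then to apply the trajectory-based stability result via Assumption~\ref{assump:1}. First I would define the estimation error $e(t) = \omega(t) - \hat\omega(t)$ and compute its dynamics from \rref{eq:thm1_2}. Subtracting the two differential equations gives $\dot{e}(t) = \mathcal{A}(\rho) e(t) - \mathcal{L}(\rho)[\mathcal{C}(\rho)\hat\omega(t) - r(t)]$; since $r(t) = \mathcal{C}(\rho)\omega(t-\tau(t))$, I rewrite the bracket as $\mathcal{C}(\rho)\hat\omega(t) - \mathcal{C}(\rho)\omega(t-\tau(t)) = -\mathcal{C}(\rho) e(t) + \mathcal{C}(\rho)[\omega(t) - \omega(t-\tau(t))]$. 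Hence $\dot{e}(t) = \mathcal{N}(\rho) e(t) - \mathcal{L}(\rho)\mathcal{C}(\rho)[\omega(t) - \omega(t-\tau(t))]$, which is exactly the $\beta$-system of Assumption~\ref{assump:1}(ii) with $\eta(t) = -\mathcal{L}(\rho)\mathcal{C}(\rho)[\omega(t)-\omega(t-\tau(t))]$. Similarly, the $\omega$-equation can be written as $\dot{\omega}(t) = \mathcal{M}(\rho)\omega(t) - \mathcal{B}(\rho)\mathcal{K}(\rho) e(t)$, which is the $\alpha$-system of Assumption~\ref{assump:1}(i) with $\zeta(t) = -\mathcal{B}(\rho)\mathcal{K}(\rho) e(t)$.

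Next I would invoke the two inequalities supplied by Assumption~\ref{assump:1}. For the error, for all $t \geq T$,
\begin{equation*}
|e(t)| \leq c\, |e(t-T)| + d\, s_2 \sup_{\ell \in [t-T,t]} |\omega(\ell) - \omega(\ell - \tau(\ell))| \;.
\end{equation*}
The key is to bound the fluctuation term $|\omega(\ell) - \omega(\ell-\tau(\ell))|$. Writing $\omega(\ell) - \omega(\ell-\tau(\ell)) = \int_{\ell - \tau(\ell)}^{\ell} \dot{\omega}(m)\, dm$ and using $\dot{\omega} = \mathcal{M}(\rho)\omega - \mathcal{B}(\rho)\mathcal{K}(\rho) e$, one gets $|\omega(\ell) - \omega(\ell-\tau(\ell))| \leq \bar\tau\big( s_3 \sup |\omega| + s_1 \sup |e|\big)$, where the suprema are over a window of length $\bar\tau$ ending at $\ell$; enlarging these to the window $[t-T, t]$ (legitimate since $T \geq \bar\tau$) yields a bound in terms of $\sup_{[t-T,t]}|\omega|$ and $\sup_{[t-T,t]}|e|$. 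For the state, the $\alpha$-inequality gives $|\omega(t)| \leq a|\omega(t-T)| + b s_1 \sup_{[t-T,t]}|e|$. Substituting the fluctuation bound produces a coupled pair of inequalities of the form $|e(t)| \leq c|e(t-T)| + d s_2 \bar\tau (s_3 E_\omega + s_1 E_e)$ and $|\omega(t)| \leq a|\omega(t-T)| + b s_1 E_e$, where $E_\omega, E_e$ denote the relevant window suprema.

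Then I would combine these into a single scalar trajectory-based inequality. The idea is to form a suitable linear combination (or to substitute one bound into the other) so that the coefficient multiplying the "sup of the combined state" is strictly less than $1$ precisely under the hypothesis $\bar\tau < \bar\tau_u$; the expression $\bar\tau_u = (1-a)(1-c)/\big(d s_1 s_2((1-a)+b s_3)\big)$ is exactly what makes this contraction constant less than one after the algebra is carried out. Once such an inequality $N(t) \leq q\, \sup_{\ell\in[t-T,t]} N(\ell)$ with $q \in [0,1)$ is established for an appropriate norm-like quantity $N$ combining $|\omega|$ and $|e|$ (or $|\hat\omega|$), I would apply the trajectory-based small-gain / comparison lemma from \cite{ahmed2018dynamic} to conclude that $N(t) \to 0$ exponentially, uniformly in $\rho \in \mathscr{P}$ and in the switching signal, hence GUES of the origin of \rref{eq:thm1_2}. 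Since all constants $s_1, s_2, s_3$ and the bounds $a,b,c,d$ are uniform over $\mathscr{P}$, the rate does not depend on the particular $\rho$.

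The main obstacle I expect is the bookkeeping in the third step: choosing the right linear combination of the $|\omega|$ and $|e|$ inequalities and correctly propagating the nested suprema over overlapping windows of lengths $\bar\tau$ and $T$ so that the final contraction constant comes out to be exactly the stated $\bar\tau/\bar\tau_u$ (up to the structure of the bound), while keeping track of which terms are "present state" versus "delayed by $T$." Care is also needed to ensure the trajectory-based result of \cite{ahmed2018dynamic} applies verbatim to the uncountable family of modes here — but since the constants are uniform in $\rho$ and the dwell-time condition \rref{eq:dist} holds, this should go through as in the cited reference.
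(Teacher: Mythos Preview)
Your approach is essentially the same as the paper's: introduce the estimation error, recast the closed loop as the $(\omega,\tilde\omega)$ system driven by the fluctuation $\omega(\ell)-\omega(\ell-\tau(\ell))$, bound the latter via its integral representation, and conclude with the trajectory-based lemma from \cite{ahmed2018dynamic}. Two small points are worth noting.

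First, the window: when you take $\sup_{\ell\in[t-T,t]}$ of a quantity that itself involves $\sup$ over $[\ell-\bar\tau,\ell]$, the combined window is $[t-T-\bar\tau,t]$, not $[t-T,t]$; the condition $T\geq\bar\tau$ does not let you shrink it back. The paper carries the window $[t-T-\bar\tau,t]$ throughout (and correspondingly requires $t\geq T+\bar\tau$), and Lemma~\ref{extlem} is then applied with horizon $T+\bar\tau$.

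Second, the final step: rather than searching for a scalar linear combination of $|\omega|$ and $|e|$, the paper keeps the two inequalities in vector form and applies Lemma~\ref{extlem} directly to the pair $(|\omega|,|\tilde\omega|)$, checking that the nonnegative matrix
\[
\begin{bmatrix} a & bs_1\\ ds_2s_3\bar\tau & c+ds_1s_2\bar\tau \end{bmatrix}
\]
is Schur. For a nonnegative $2\times 2$ matrix with diagonal entries in $[0,1)$ this reduces to $(1-a)(1-c-ds_1s_2\bar\tau)>bs_1\cdot ds_2s_3\bar\tau$, which is exactly $\bar\tau<\bar\tau_u$. This bypasses the ``choose the right weights'' bookkeeping you anticipated; your scalar-combination route would ultimately reproduce the same condition (via the Perron eigenvector), but the matrix formulation is cleaner.
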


\begin{remark}
The results of \cite{ahmed2018dynamic} apply only to switched systems with countable modes. Here we extend their results to switched systems with an uncountable number of modes, i.e., parameter-dependent systems with a piecewise constant parameter.
\end{remark}

\begin{proof}
Let us define the error as  $\tilde{\omega}(t) = \hat{\omega}(t) - \omega(t)$.
Then
\begin{equation*}  
\label{eq:thm1_4}
\dot{\tilde{\omega}}(t) = \mathcal{A}(\rho)\tilde{\omega}(t) + \mathcal{L}(\rho)[\mathcal{C}(\rho)\hat{\omega}(t) - \mathcal{C}(\rho) \omega(t - \tau(t))]\; .
\end{equation*}
Using  $\mathcal{M}(\rho) = \mathcal{A}(\rho) + \mathcal{B}(\rho) \mathcal{K}(\rho)$ and $\mathcal{N}(\rho) = \mathcal{A}(\rho) + \mathcal{L}(\rho) \mathcal{C}(\rho)$, we have
\begin{equation*} 
\label{eq:thm1_6}
\left\{
\begin{array}{ccl}
\dot{\omega}(t) & = & \mathcal{M}(\rho) \omega(t) + \mathcal{B}(\rho) \mathcal{K}(\rho)\tilde{\omega}(t)
\\[1mm]
\dot{\tilde{\omega}}(t) & = & \mathcal{N}(\rho) \tilde{\omega}(t) + \mathcal{L}(\rho) \mathcal{C}(\rho) [\omega(t) - \omega(t - \tau(t))]\; .
\end{array}
\right.
\end{equation*}
From Assumption \ref{assump:1} and the equality 
\begin{equation*}
\begin{array}{l}
\omega(\ell) - \omega(\ell - \tau(\ell))
\\
= \int_{\ell - \tau(\ell)}^{\ell} [\mathcal{M}(\rho(m)) \omega(m) + \mathcal{B}(\rho(m)) \mathcal{K}(\rho(m)) \tilde{\omega}(m)] dm,
\end{array}
\end{equation*}
it follows that for all $t \geq T + \bar{\tau}$,
\begin{equation}
\label{eq:thm1_7}
|\omega(t)| \leq a |\omega(t - T)| + b \displaystyle\sup_{\ell \in [t - T, t]} |\mathcal{B}(\rho(\ell)) \mathcal{K}(\rho(\ell)) \tilde{\omega}(\ell)|
\end{equation}
\begin{equation} \hspace{-0.01cm}
\begin{array}{l}
\label{eq:thm1_8}
|\tilde{\omega}(t)|  \leq   c |\tilde{\omega}(t - T)| +
d \displaystyle\sup_{\ell \in [t - T, t]} \left| \vphantom{ \int_{\ell - \tau(\ell)}^{\ell}} \mathcal{L}(\rho(\ell)) \mathcal{C}(\rho(\ell)) \right.
\\
\left. \times \int_{\ell - \tau(\ell)}^{\ell} [\mathcal{M}(\rho(m)) \omega(m) + \mathcal{B}(\rho(m)) \mathcal{K}(\rho(m)) \tilde{\omega}(m)] dm \right| \; .
\end{array}
\end{equation}

Using the constants $s_1$, $s_2$, and $s_3$ defined in \rref{eq:theK}, we deduce from (\ref{eq:thm1_7}) and (\ref{eq:thm1_8}) that $(\omega(t), \tilde{\omega}(t))$ satisfies:
\begin{equation}   \nonumber
\label{eq:thm1_13}
\begin{array}{ccl}
|\omega(t)| & \leq & a |\omega(t - T)| + b s_1\displaystyle\sup_{\ell \in [t - T - \bar{\tau}, t]} |\tilde{\omega}(\ell)|\; ,
\\[4mm]
|\tilde{\omega}(t)| & \leq & d s_2 s_3 \bar\tau \displaystyle\sup_{\ell \in [t - T - \bar{\tau} , t]} |\omega(\ell)|
\\[4mm]
& & + (c + d s_1 s_2 \bar\tau) \displaystyle\sup_{\ell \in [t - T-\bar{\tau} , t]} |\tilde{\omega}(\ell)|\; .
\end{array}
\end{equation}
Lemma~\ref{extlem} from Appendix~A ensures that the origin of (\ref{eq:thm1_2}) is GUES if
\begin{equation} \nonumber
\label{mart}
\left[
\begin{array}{cc}
a & b s_1
\\
d s_2 s_3 \bar\tau  & d s_1 s_2 \bar\tau + c
\end{array}
\right]
\end{equation}
is Schur stable, which is equivalent to
\begin{equation} \nonumber
\label{aort}
\begin{array}{l}
\frac{a + c + ds_1s_2\bar{\tau}}{2} +
\\[3mm]
\sqrt{\left(\frac{a + c + d s_1 s_2 \bar{\tau}}{2}\right)^2 - ac -ds_1s_2 \left(a - bs_3\right)
\bar{\tau}} < 1\; ,
 \end{array}
\end{equation}
from which we derive the  condition 
\begin{equation}
\bar{\tau}< \frac{(1 - a)(1 - c)}{d s_1 s_2((1 - a) + b s_3)} \;.
\end{equation}
This concludes the proof. $\hfill\blacksquare$
\end{proof}  

\section{Application to Multi-agent Nonholonomic Systems}
In this section, we illustrate our approach by applying it to the consensus problem of  multi-agent nonholonomic systems subject to switching topology and communication delay. The dynamics of the multi-agent nonholonomic system  is adopted from \cite{gonzalez2014lpv}.  

Consider the multi-agent system comprising of six agents ($N=6$) described by
\begin{equation}
\label{example}
\begin{array}{rcl}
\dot{x}(t) &=& (I_N \otimes \bar{A}_a + \mathcal{P}(t) \otimes \bar{A}_b)x(t) 
\\[1mm]
&&+ (I_N \otimes \bar{B}_a)u(t) \\ [1mm] 
y(t) &=& (I_N \otimes \bar{C}_a)x(t - \tau(t))    \; ,
\end{array}
\end{equation} 
 where the system matrices are given by
\begin{equation*}
\small
\begin{array}{lll}
\bar{A}_a = 
\left[
\begin{matrix}
0 & 1 \\ -1 & 0 
\end{matrix}
\right], \
\bar{A}_b = 
\left[
\begin{matrix}
0 & -0.5 \\ 0.5 & 0 
\end{matrix}
\right] \\ [3mm]
\bar{B}_a = 
\left[
\begin{matrix}
1 & 0 \\ 0 & 0.6
\end{matrix}
\right], \ 
\bar{C}_a = 
\left[ 
\begin{matrix}
1 & 0 \\ 0 & 1
\end{matrix}
\right].
\end{array}
\end{equation*} 
For the multiagent system  \rref{example}, the  pattern matrix is specified as 
\begin{equation*}
\begin{array}{rcl}
\mathcal{P}(t) = \sigma(t)\mathcal{P}_1 + (1 - \sigma(t))\mathcal{P}_2\; ,
\end{array}
\end{equation*}
where $\sigma(t) \in [0,\ 1]$ with $\underline{\delta} = 0.1, \ \bar{\delta} = 0.5 $, and the  symmetric commutable matrices $\mathcal{P}_1$ and $\mathcal{P}_2$ are given by
\begin{equation} \nonumber
\mathcal{P}_1 = 
\left[
\begin{matrix}
1 &-0.5 & 0& 0& 0 &-0.5 \\ 
    -0.5 & 1& -0.5& 0 &0& 0\\
    0 &-0.5& 1 &-0.5& 0 &0 \\
    0 &0 &-0.5 & 1 &-0.5& 0  \\ 
    0 &0 &0 &-0.5& 1 &-0.5 \\
    -0.5& 0& 0 &0 &-0.5 &1  \\
\end{matrix}
\right]
\end{equation}
\begin{equation} \nonumber
\mathcal{P}_2 =
\left[
\begin{matrix}
1 & -0.25 &-0.25& 0& -0.25 & -0.25  \\
    -0.25& 1 &-0.25 &-0.25 & 0 & -0.25 \\ 
    -0.25 &-0.25& 1 &-0.25 & -0.25 & 0   \\
    0 &-0.25 &-0.25& 1 & -0.25 & -0.25 \\
    -0.25& 0& -0.25 &-0.25 & 1  & -0.25   \\   
    -0.25 &-0.25& 0 &-0.25 &-0.25& 1 \\
\end{matrix}
\right]\; .
\end{equation}
For the matrices $\mathcal{P}_1$ and $\mathcal{P}_2$, we have $\min\{\underline{\lambda}_1,\ \underline{\lambda}_2 \}= 0$, $\max\{\bar{\lambda}_1,\ \bar{\lambda
}_2 \} = 2$.  Therefore, by defining a piecewise constant parameter $\rho \in \mathscr{P} \ \text{that takes values in} \  [ 0, 2 ]$, and then employing Theorem~2, the decomposable system \rref{example} is equivalent to the  LPV system 
\begin{equation*}
\begin{array}{rcl}
\dot{\omega}(t)&=&\mathcal{A}(\rho)\omega(t)+\mathcal{B}(\rho) v(t) \\ [1mm]
r(t)&= &\mathcal{C}(\rho) \omega(t-\tau(t))\; ,
\end{array}
\end{equation*}
where 
\begin{equation*}
\begin{array}{rcl}
\mathcal{A}(\rho) = \bar{A}_a + \rho \bar{A}_b,\ \mathcal{B}(\rho) = \bar{B}_a, \ \text{and}\
\mathcal{C}(\rho) = \bar{C}_a\;.
\end{array}
\end{equation*}    

We choose the controller gains and observer gains as
\begin{equation*}
\small
\begin{array}{rcl}
\bar{K}_a &=& \bar{L}_a = \vspace{1mm}
\left[
\begin{matrix}
-0.5 & 0 \\ 0  & -0.5 
\end{matrix}
\right] ,  \\
\bar{K}_b &=& \bar{L}_b =
\left[ 
\begin{matrix}
0.1 & 0 \\ 0 & 0.1
\end{matrix}
\right].
\end{array}
\end{equation*}
Using Theorem~2, we model the distributed controller $K(t) = I_N \otimes \bar{K}_a + \mathcal{P}(t) \otimes \bar{K}_b$ and the distributed observer $L(t) = I_N \otimes \bar{L}_a + \mathcal{P}(t) \otimes \bar{L}_b$  as $\mathcal{K}(\rho)= \bar{K}_a + \rho \bar{K}_b$ and $\mathcal{L}(\rho) = \bar{L}_a + \rho \bar{L}_b$, respectively. 

In order to satisfy Assumption~1, we proceed as follows. First, we solve the LMIs \rref{eq:prelim_20}, \rref{eq:prelim_21}, and \rref{eq:prelim_22} in Lemma~3~(Appendix~B) by setting $\Omega(\rho)=\mathcal{M}(\rho)=\mathcal{A}(\rho)+\mathcal{B}(\rho)\mathcal{K}(\rho)$. This yields $d_1=0.01$, $d_2=0.01$, $\mu=1$, $\gamma=1$, $\eta=50$, and $T=5$. Therefore, part (i) of Assumption~1 is satisfied with 
\begin{equation*}
\begin{array}{rcl}
a&=& \sqrt{\frac{d_2}{d_1} \mu  (\mu e^{-\gamma \underline{\delta}})^{\eta}e^{\gamma \overline{\delta}}} = 0.1054\\
 b &=& \sqrt{\mu \frac{d_2}{\gamma d_1} T}= 2.2361.
 \end{array}
\end{equation*} 
Then, we set $\Omega(\rho)=\mathcal{N}(\rho)=\mathcal{A}(\rho)+\mathcal{L}(\rho)\mathcal{C}(\rho)$ in Lemma~3 (Appendix~B), and again solve the LMIs \rref{eq:prelim_20}, \rref{eq:prelim_21}, and \rref{eq:prelim_22} yielding $d_1=0.01$, $d_2=0.01$, $\mu=1$, $\gamma=1$, $\eta=75$, and $T=7.5$. Therefore, part (ii) of Assumption~1  is satisfied with  
\begin{equation*}
\begin{array}{rcl}
c &=& \sqrt{\frac{d_2}{d_1} \mu  (\mu e^{-\gamma \underline{\delta}})^{\eta}e^{\gamma \overline{\delta}}}= 0.0302 \\
d &=&\sqrt{\mu \frac{d_2}{\gamma d_1} T}=2.7386.
 \end{array}
\end{equation*} 
Moreover, $s_1 = 0.5$ , $s_2= 0.5$, $s_3 = 1.177 $. According to Theorem \ref{thm:theorem1}, the closed-loop system is GUES for $\tau(t)\leq \bar{\tau}<0.3593$.

\subsection{Computational Aspects}
The LMIs \rref{eq:prelim_20}, \rref{eq:prelim_21}, and \rref{eq:prelim_22} obtained in Lemma~3~(Appendix~B) take the form of infinite-dimensional semidefinite program. In order to check their feasibility, we propose gridding method. The idea is to approximate semi-infinite constraint LMI by a finite number of of LMIs, \cite{briat2014linear}, that can be implemented using YALMIP, \cite{lofberg2004yalmip}, and solved using semidefinite programming solver such as SeDuMi, \cite{sturm1999using}.

\subsection{Simulation Results}
 The  closed-loop is simulated subject to time-varying delay $\tau(t) = 0.05 sin(4t) + 0.3$. Fig. 1 shows the evolution of the state trajectories and the switching signal for the simulation setup. It is evident that the state trajectories reach a consensus. For the switching signal shown in Fig.~1, the time between two consecutive jumps is a uniform random variable that takes values in the compact set $[\underline{\delta}, \ \bar{\delta}]$, and the value of $\sigma(t)$ at each jump is also uniform random variable that takes values in the compact set  $[0, \ 1]$.   The consensus of multiagent system in Fig.~1 subject to  switching topology and time varying delay  reflects the efficacy of the approach. 
 
\begin{figure}
\includegraphics[width = 0.9\columnwidth]{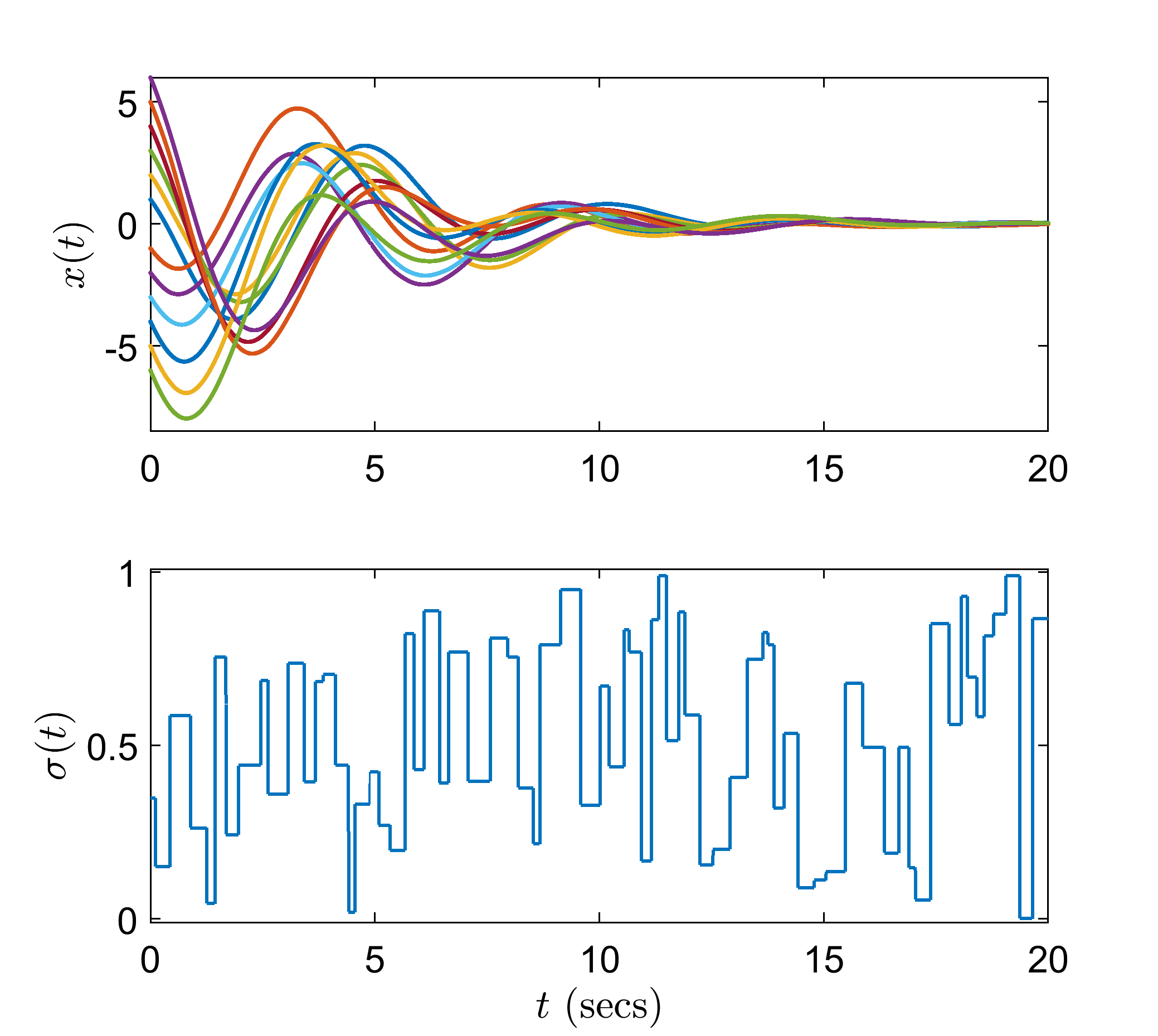}
\label{fig:states}
\caption{State trajectories (top),  switching signal  (bottom).}
\end{figure}


\section{Conclusion}
It has been shown in this paper that decomposable systems with switching topology can be modeled as LPV systems with a piecewise constant parameter. Then, output feedback stabilizing controllers are designed for these LPV systems in the presence of a time-varying output delay.  A trajectory-based approach is used for stability analysis which circumvents
the serious obstacle presented by the search for appropriate Lyapunov functionals. No constraint is imposed on the delay derivative. Output feedback control of distributed systems subject to stochastic topology following the idea \cite{zakwan2019poisson} seems to be one of the promising future extensions of the current work. Moreover, employing different kinds of the controller (e.g., memory or memory resilient) and observer structures (e.g., interval observers) can be quite intriguing. These extensions will be reported elsewhere.

\begin{ack}
The authors would like to acknowledge useful discussions with Professor Hitay \"Ozbay and Dr. Corentin Briat.
\end{ack}

\bibliography{ifacconf}  

\appendix
\section{Technical Lemmas}    
In this section, we provide technical lemmas. Lemma~1 highlights an interesting property of decomposable matrices and it is used to prove Theorem~\ref{thm:Theorem 1}. Lemma~2 recalls the trajectory based stability analysis approach from \cite{ahmed2018dynamic} and it is used to prove Theorem~\ref{thm:theorem1}. 
\begin{lemma}
\label{eq:lem1}
Consider a  matrix $M(t)$ with the structure \rref{eq:def1} subject to the pattern matrix $\mathcal{P}(t)$ given in \rref{pattern}, then the matrix 
\[
M^\dagger (t)=(U\otimes I_p)^{-1}M(t)(U\otimes I_q)
\]
is block diagonal and has the following structure
\begin{equation}
\label{M_dagger}
{M}^\dagger(t) = I_N\otimes \bar{M}^a + \Lambda(t) \otimes \bar{M}^b\; ,
\end{equation}
where each of the block has the form ${M}^ \dagger_i(t)=\bar{M}^a + \nu _i(t) \bar{M}^b $ where $\nu_i(t) = \sigma(t)\lambda_{1i} + (1 - \sigma(t))\lambda_{2i}$. Moreover, for every  matrix $M^\dagger(t)$ with the structure \rref{M_dagger}, we have 
\[
M(t) =(U\otimes I_p)M^\dagger(t)(U\otimes I_q)^{-1}
=I_N\otimes \bar{M}^a + \mathcal{P}(t)\otimes \bar{M}^b.
\] 
\end{lemma}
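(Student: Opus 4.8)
The plan is to reduce the entire statement to two elementary facts about Kronecker products: the mixed-product rule $(A\otimes B)(C\otimes D)=(AC)\otimes(BD)$, and the observation that a Kronecker product whose left factor is diagonal is block diagonal. First I would record the simultaneous diagonalization of the pattern: since $\mathcal{P}_1$ and $\mathcal{P}_2$ are symmetric and commute, \cite[Theorem 1.3.12]{horn2012matrix} provides a unitary $U$ with $U^{-1}\mathcal{P}_jU=\Lambda_j=\mathrm{diag}(\lambda_{j1},\dots,\lambda_{jN})$ for $j=1,2$; by linearity $U^{-1}\mathcal{P}(t)U=\sigma(t)\Lambda_1+(1-\sigma(t))\Lambda_2=\Lambda(t)$, a diagonal matrix whose $i$-th entry is precisely $\nu_i(t)=\sigma(t)\lambda_{1i}+(1-\sigma(t))\lambda_{2i}$.

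Next I would perform the congruence. Writing $(U\otimes I_p)^{-1}=U^{-1}\otimes I_p$ and distributing over the sum in \rref{eq:def1},
\[
M^\dagger(t)=(U^{-1}\otimes I_p)(I_N\otimes\bar M^a)(U\otimes I_q)+(U^{-1}\otimes I_p)(\mathcal{P}(t)\otimes\bar M^b)(U\otimes I_q).
\]
Two applications of the mixed-product rule collapse the first term to $(U^{-1}I_NU)\otimes(I_p\bar M^aI_q)=I_N\otimes\bar M^a$ and the second to $(U^{-1}\mathcal{P}(t)U)\otimes(I_p\bar M^bI_q)=\Lambda(t)\otimes\bar M^b$, which is exactly \rref{M_dagger}. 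Because $\Lambda(t)$ is diagonal, both $I_N\otimes\bar M^a$ and $\Lambda(t)\otimes\bar M^b$ are block diagonal with $N$ blocks of size $p\times q$; adding them block by block gives the $i$-th block $\bar M^a+\nu_i(t)\bar M^b=M^\dagger_i(t)$, which establishes block diagonality and the stated form of the diagonal blocks.

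For the converse I would simply run the same computation in the opposite direction: starting from any $M^\dagger(t)$ of the form \rref{M_dagger} and using $(U\otimes I_q)^{-1}=U^{-1}\otimes I_q$, the mixed-product rule gives $(U\otimes I_p)M^\dagger(t)(U\otimes I_q)^{-1}=I_N\otimes\bar M^a+(U\Lambda(t)U^{-1})\otimes\bar M^b$, and $U\Lambda(t)U^{-1}=\sigma(t)\mathcal{P}_1+(1-\sigma(t))\mathcal{P}_2=\mathcal{P}(t)$ returns \rref{eq:def1}.

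I do not anticipate a genuine obstacle; the argument is essentially bookkeeping. The two points that need care are (i) keeping the identity blocks $I_p$ and $I_q$ straight so that $I_p\bar M^aI_q=\bar M^a$ remains dimensionally consistent for rectangular $\bar M^a,\bar M^b\in\mathbb{R}^{p\times q}$, and (ii) emphasizing that one and the same unitary $U$ block-diagonalizes both $\mathcal{P}_1$ and $\mathcal{P}_2$, so the convex combination $\mathcal{P}(t)$ is diagonalized by a transformation that does not depend on $t$; this is what makes the congruence $t$-uniform and justifies writing $\Lambda(t)$ as the convex combination of the fixed diagonal matrices $\Lambda_1,\Lambda_2$.
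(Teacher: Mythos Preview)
Your proposal is correct and follows essentially the same route as the paper: expand $M(t)$ via Definition~\ref{def:1}, apply the mixed-product rule $(A\otimes B)(C\otimes D)=(AC)\otimes(BD)$ to each summand, and use $U^{-1}\mathcal{P}(t)U=\Lambda(t)$ to obtain \rref{M_dagger}; block diagonality then follows from the diagonality of $I_N$ and $\Lambda(t)$, and the converse is handled by the same computation in reverse. The paper's proof is slightly terser (it does not spell out the simultaneous diagonalization or the dimensional bookkeeping for rectangular $\bar M^a,\bar M^b$), but your added remarks on these points are accurate and helpful rather than divergent.
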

\begin{proof}
From Definition 1, we can write 
\begin{equation*}
{M}^\dagger(t) = (U \otimes I_p )^{-1} (I_N \otimes \bar{M}^a + \mathcal{P}(t) \otimes \bar{M}^b) (U \otimes I_q )
\end{equation*}
then from the properties of the Kronecker product \citep{brewer1978kronecker}, we have 
\begin{equation*}
\begin{array}{lll}
{M}^\dagger(t) = (U^{-1} I_N U \otimes I_p \bar{M}^a I_q ) +  ( U^{-1} \mathcal{P}(t) U \otimes I_p \bar{M}^b I_q )\;. 
\end{array}
\end{equation*}
As an immediate consequence,
\begin{equation*}
\begin{array}{lll} 
{M}^\dagger(t) = I_N \otimes \bar{M}^a + \Lambda(t) \otimes \bar{M}^b \; .
\end{array}
\end{equation*}
Since $I_N$ and $\Lambda(t)$ are diagonal, therefore,  ${M}^{\dagger}(t)$ is block diagonal. The converse can be proved analogously. $\hfill\blacksquare$
\end{proof}

\begin{lemma} [\citealp{ahmed2018dynamic}]
\label{extlem}
Let us consider a constant $T > 0$ and $l$ functions $z_g : [- T, + \infty) \rightarrow [0, + \infty)$, $g = 1, ... , l$.
Let $Z(t) = (z_1(t) \; ... \; z_l(t))^\top$ and, for any $\theta \geq 0$ and $t \geq \theta$, define
$\mathfrak{V}_{\theta}(t) = \left(\displaystyle\sup_{s \in [t - \theta, t]} z_1(s) \; ...
\displaystyle\sup_{s \in [t - \theta, t]} z_l(s)\right)^\top$.
Let $\Upsilon \in \mathbb{R}^{l \times l}$ be a nonnegative Schur stable matrix.
If for all $t \geq 0$, the inequalities
$Z(t) \leq \Upsilon \mathfrak{V}_T(t)$
are satisfied, then
$\displaystyle\lim_{t \rightarrow + \infty} z_g(t) = 0
~~~\forall~  g = 1,\hdots, l$. 
\end{lemma}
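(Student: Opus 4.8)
The plan is to reduce the componentwise vector inequality to a single scalar recursion and then run a sliding-window induction. First I would use that $\Upsilon$ is nonnegative and Schur stable to produce a strictly positive vector $p = (p_1\ \dots\ p_l)^\top$ and a constant $q\in[0,1)$ with $\Upsilon p \le q p$ componentwise. The natural choice is $p = (I-\Upsilon)^{-1}\mathbf{1}$ with $\mathbf{1} = (1\ \dots\ 1)^\top$: since the spectral radius of $\Upsilon$ is below $1$, the Neumann series gives $(I-\Upsilon)^{-1} = \sum_{n\ge 0}\Upsilon^n$, which is nonnegative and dominates $I$, so $p \ge \mathbf{1} \gg 0$, while $\Upsilon p = p - \mathbf{1} \le q p$ for $q = \max_i(1 - 1/p_i) < 1$. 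Setting $w(t) = \max_g z_g(t)/p_g$, so that $Z(t)\le w(t)\,p$ and $\mathfrak{V}_T(t)\le\big(\sup_{s\in[t-T,t]}w(s)\big)p$ componentwise, the hypothesis $Z(t)\le\Upsilon\mathfrak{V}_T(t)$ together with $\Upsilon p\le q p$ collapses to the scalar inequality $w(t)\le q\,\sup_{s\in[t-T,t]}w(s)$ valid for all $t\ge 0$.

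Next I would set $W_0 = \sup_{s\in[-T,0]}w(s)$ and $W_k = \sup_{s\in[(k-1)T,kT]}w(s)$ for $k\ge 1$ — all finite under the (implicit) local boundedness of the $z_g$ that is already needed for $\mathfrak{V}_T$ to be well defined — and prove $W_{k+1}\le q W_k$ for all $k\ge 0$. For $t\in[kT,(k+1)T]$ the interval $[t-T,t]$ is contained in $[(k-1)T,(k+1)T]$, hence $w(t)\le q\max\{W_k,W_{k+1}\}$; taking the supremum over such $t$ gives $W_{k+1}\le q\max\{W_k,W_{k+1}\}$, and since $q<1$ and $W_{k+1}\ge 0$ this forces $W_{k+1}\le q W_k$. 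Iterating yields $W_k\le q^k W_0\to 0$, and because $z_g(t)\le p_g w(t)\le p_g W_k$ on $[(k-1)T,kT]$, every $z_g(t)\to 0$ as $t\to\infty$. I would also remark that this bound is geometric in $k$, hence exponential in $t$ with rate $q^{1/T}$; it is this stronger decay that actually underpins the GUES conclusion of Theorem~\ref{thm:theorem1}, rather than mere convergence.

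The main obstacle is the first step: securing $p\gg 0$ and $q<1$ without assuming $\Upsilon$ irreducible, so that a Perron--Frobenius eigenvector is not directly at hand. The resolvent construction $p=(I-\Upsilon)^{-1}\mathbf{1}$ circumvents this, but has to be justified from Schur stability through convergence and termwise nonnegativity of the Neumann series. A secondary, bookkeeping-type concern is regularity: the suprema defining $\mathfrak{V}_\theta$ must be finite, so the $z_g$ must be at least locally bounded, and the induction must be arranged so that each $W_k$ is known to be finite before the inequality $W_{k+1}\le q\max\{W_k,W_{k+1}\}$ is used to absorb the $W_{k+1}$ term on its right-hand side. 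Once those two points are in place, the passage from the vector inequality to the scalar one and the sliding-window iteration are routine.
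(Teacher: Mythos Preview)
The paper does not supply its own proof of this lemma; it is quoted verbatim from \cite{ahmed2018dynamic} and used as a black box in the proof of Theorem~\ref{thm:theorem1}. So there is nothing in the present paper to compare your argument against.

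That said, your argument is correct and self-contained. The resolvent construction $p=(I-\Upsilon)^{-1}\mathbf{1}$ is exactly the right device here: it sidesteps irreducibility, and the identity $\Upsilon p=p-\mathbf{1}$ immediately gives a strict contraction factor $q=\max_i(1-1/p_i)<1$ because $p\ge\mathbf{1}$. The collapse to the scalar Halanay-type inequality $w(t)\le q\sup_{[t-T,t]}w$ and the sliding-window recursion $W_{k+1}\le q\max\{W_k,W_{k+1}\}\Rightarrow W_{k+1}\le qW_k$ are carried out cleanly; the finiteness caveat you flag is genuine but, as you note, already implicit in the well-posedness of $\mathfrak{V}_T$. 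Your closing remark that the decay is geometric in $k$, hence exponential in $t$, is also on point: that is precisely what the paper needs when it invokes the lemma to conclude GUES rather than mere asymptotic stability.
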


\section{Checking Assumption 1}
\label{ason}

In this section, we illustrate a method to determine the constants $a$, $b$, $c$, and $d$ to satisfy  Assumption \ref{assump:1}. \\
Consider an LPV system subject to piecewise parameter trajectory
\begin{equation}
\label{eq:prelim_1}
\dot{\xi}(t) = \Omega(\rho) \xi(t) + \vartheta(t)\; ,
\end{equation}
where $\xi \in \mathbb{R}^{d_{\xi}}$, $\rho \in \mathscr{P}$ and $\vartheta$ is a piecewise continuous function.
\begin{lemma}
\label{lem:lemma2}
Let the system (\ref{eq:prelim_1}) be such that there are real numbers $d_1 > 0$, $d_2 > 0$, $\mu \geq 1$, $\gamma > 0$
and symmetric positive definite matrices $\mathcal{Q}(\rho)$, such that the LMIs
\begin{eqnarray}
~~~~&~~&
d_1 I \preceq \mathcal{Q}(\rho) \preceq d_2 I\; ,\label{eq:prelim_20} \\
~~~~&~~&
\mathcal{Q}(\rho) \preceq \mu \mathcal{Q}(\theta)\; ,\label{eq:prelim_21}\\
~~~~&~~&
\Omega(\rho)^\top \mathcal{Q}(\rho) + \mathcal{Q}(\rho) \Omega(\rho) \preceq  - \gamma \mathcal{Q}(\rho) \label{eq:prelim_22}
\end{eqnarray}
are satisfied for all $\rho, \ \theta \in \mathscr{P}$.
Moreover, the constant $\mu_{\triangle} = \mu e^{-\gamma \underline{\delta}}$ is such that
\begin{equation*}
\label{eq:prelim_5}
\mu_{\triangle} < 1\; .
\end{equation*}
Then, along the trajectory of (\ref{eq:prelim_1}), the inequality
\begin{equation} \nonumber
\label{eq:prelim_6}
|\xi (t)| \leq  \sqrt{\frac{d_2}{d_1} \mu  \mu_{\triangle}^{\eta}e^{\gamma \overline{\delta}}} |\xi(t - T)|
+ \sqrt{\mu \frac{d_2}{\gamma d_1} T} \displaystyle\sup_{\ell \in [t - T, t]} |\vartheta(\ell)|
\end{equation}
holds for all $t \geq T$ where $T>0$ and $\eta$ is a positive integer depending on the choice of $T$ such that for all $t\in[t_k, t_{k+1})$, we have $t-T\in[t_{k-\eta-1}, t_{k-\eta})$. Moreover, we have  $\sqrt{\frac{d_2}{d_1} \mu  \mu_{\triangle}^{\eta}e^{\gamma \overline{\delta}}}<1$ when $\eta > \frac{1}{\ln(\mu_{\Delta})}
\left[\ln \left(\frac{d_1}{d_2\mu}\right) - \gamma \overline{\delta}\right]$. 
\end{lemma}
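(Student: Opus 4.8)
The plan is to introduce a parameter-dependent quadratic Lyapunov function for \rref{eq:prelim_1} and to chain its decay estimate across the switching instants contained in the window $[t-T,t]$. I would set $V(t)=\xi(t)^{\top}\mathcal{Q}(\rho(t))\xi(t)$. By \rref{eq:prelim_20} we have $d_{1}|\xi(t)|^{2}\le V(t)\le d_{2}|\xi(t)|^{2}$, which is used only at the end to pass from $V$ back to $|\xi|$. Because $\rho\in\mathscr{P}$ is piecewise constant, on each interval $[t_{k},t_{k+1})$ the matrix $\mathcal{Q}(\rho(\cdot))$ equals the constant $\mathcal{Q}(p_{k})$; differentiating $V$ along \rref{eq:prelim_1} and invoking \rref{eq:prelim_22} gives $\dot{V}(t)\le-\gamma V(t)+2\xi(t)^{\top}\mathcal{Q}(p_{k})\vartheta(t)$. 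To keep the forcing affine in $\sup|\vartheta|$, I would estimate the cross term via Cauchy--Schwarz in the inner product induced by $\mathcal{Q}(p_{k})$, i.e. $2\xi^{\top}\mathcal{Q}(p_{k})\vartheta\le 2\sqrt{d_{2}}\,\sqrt{V}\,|\vartheta|$, and then work with $W=\sqrt{V}$, for which $\frac{d}{dt}W\le-\frac{\gamma}{2}W+\sqrt{d_{2}}\,|\vartheta|$ holds (in the Dini sense) on each $[t_{k},t_{k+1})$; integrating yields, for $t_{k}\le s\le t<t_{k+1}$, $W(t)\le e^{-\frac{\gamma}{2}(t-s)}W(s)+\sqrt{d_{2}}\int_{s}^{t}e^{-\frac{\gamma}{2}(t-m)}|\vartheta(m)|\,dm$.

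At a switching instant $t_{k}$ the state $\xi$ is continuous while $\mathcal{Q}$ changes, so \rref{eq:prelim_21} gives $V(t_{k})\le\mu V(t_{k}^{-})$, i.e. $W(t_{k})\le\sqrt{\mu}\,W(t_{k}^{-})$. Now fix $t\ge T$, say $t\in[t_{k},t_{k+1})$: by the definition of $\eta$ and the lower dwell-time $\underline{\delta}$, the switching instants in $(t-T,t]$ are $t_{k-\eta},\dots,t_{k}$ (at most $\eta+1$ of them) and, moreover, $T>\eta\underline{\delta}$. Telescoping the flow estimates across these subintervals and inserting the factor $\sqrt{\mu}$ at each jump, the exponential rates combine into $\mu^{(\eta+1)/2}e^{-\frac{\gamma}{2}T}$ in front of $W(t-T)$, which by $T>\eta\underline{\delta}$ and $\mu_{\triangle}=\mu e^{-\gamma\underline{\delta}}$ is at most $\sqrt{\mu\,\mu_{\triangle}^{\eta}e^{\gamma\overline{\delta}}}$ (the factor $e^{\gamma\overline{\delta}}$ being conservative slack). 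The perturbation contributions of the subintervals accumulate into $\int_{t-T}^{t}$ of a decaying kernel against $|\vartheta|$; bounding it (most cleanly through a Cauchy--Schwarz step of the form $\int_{t-T}^{t}e^{-\gamma(t-m)}|\vartheta(m)|^{2}\,dm\le\frac{1}{\gamma}\sup|\vartheta|^{2}$, together with a geometric sum that converges because $\mu_{\triangle}<1$) and then dividing by $\sqrt{d_{1}}$, these terms assemble into the coefficient $\sqrt{\mu d_{2}T/(\gamma d_{1})}$. Since $\sqrt{d_{1}}\,|\xi(t)|\le W(t)$ and $W(t-T)\le\sqrt{d_{2}}\,|\xi(t-T)|$, applying $\sqrt{a+b}\le\sqrt{a}+\sqrt{b}$ delivers the stated inequality.

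For the last assertion, the coefficient of $|\xi(t-T)|$ is $<1$ iff $\frac{d_{2}}{d_{1}}\mu\,\mu_{\triangle}^{\eta}e^{\gamma\overline{\delta}}<1$, i.e. $\mu_{\triangle}^{\eta}<\frac{d_{1}}{d_{2}\mu}e^{-\gamma\overline{\delta}}$; taking logarithms and dividing by $\ln(\mu_{\triangle})<0$ (which reverses the inequality) gives $\eta>\frac{1}{\ln(\mu_{\triangle})}\big[\ln\big(\frac{d_{1}}{d_{2}\mu}\big)-\gamma\overline{\delta}\big]$, as claimed. I expect the main obstacle to be the middle step: keeping the forcing term affine in $\sup|\vartheta|$ through the entire chain (which is what the passage to $W=\sqrt{V}$, or an equivalent self-referential bound on $\sup V$, accomplishes) while simultaneously tracking precisely which switching instants lie in $[t-T,t]$, so that the accumulated jump factors $\sqrt{\mu}$ and the accumulated decay assemble into $\sqrt{\mu\,\mu_{\triangle}^{\eta}e^{\gamma\overline{\delta}}}$ and not a weaker constant; everything else is routine estimation.
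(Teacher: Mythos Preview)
The paper does not actually supply a proof of this lemma: it is stated in Appendix~B (``Checking Assumption~1'') as an auxiliary tool for producing the constants $a,b,c,d$ needed in Assumption~\ref{assump:1}, and the subsequent appendix only proves Theorems~\ref{thm:Theorem 1} and~\ref{thm:dectolpv}. So there is nothing in the paper to compare your argument against.

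That said, your proof plan is the standard and correct route for results of this type. Using $V(t)=\xi(t)^{\top}\mathcal{Q}(\rho(t))\xi(t)$, the decay estimate on each constant-parameter interval from \rref{eq:prelim_22}, the jump bound $V(t_k)\le\mu V(t_k^{-})$ from \rref{eq:prelim_21}, and then telescoping over the (at most $\eta+1$) switches inside $[t-T,t]$ is exactly how such trajectory-based bounds are derived in the switched-systems literature (e.g.\ the works of Mazenc--Malisoff and \cite{ahmed2018dynamic} that the paper builds on). Your count of the jump factors, the observation that $T>\eta\underline{\delta}$ follows from the assumed location of $t-T$, and the reduction $\mu^{\eta+1}e^{-\gamma T}\le \mu\,\mu_{\triangle}^{\eta}\,e^{\gamma\overline{\delta}}$ are all correct; the final logarithmic manipulation for the threshold on $\eta$ is also fine.

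The only place to be careful is the forcing term. Working with $W=\sqrt{V}$ is a good way to keep the bound affine in $\sup|\vartheta|$, but when you telescope, the integrals from earlier subintervals pick up the subsequent $\sqrt{\mu}$ factors as well, not just the homogeneous part. To land on exactly $\sqrt{\mu d_2 T/(\gamma d_1)}$ you will want to bound the accumulated forcing by first pulling out $\sup_{[t-T,t]}|\vartheta|$, then noting that on each subinterval the product of the jump factor and the exponential decay is controlled by $\mu_{\triangle}<1$, so the sum of the kernels over $[t-T,t]$ is bounded by a quantity of order $\sqrt{\mu/\gamma}\cdot\sqrt{T}$ (a single $\sqrt{\mu}$ survives from the last jump, and the $\int e^{-\gamma(t-m)}dm\le 1/\gamma$ together with a Cauchy--Schwarz in $m$ produces the $\sqrt{T/\gamma}$). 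Your sketch already identifies this as the delicate step; once it is written out carefully the claimed constant follows.
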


\section{Proofs of the Theorems} 
In this section, we provide proofs of the theorems appearing in Section~2.
\begin{proof} [Proof of Threorem~\ref{thm:Theorem 1}:]
Using Lemma \ref{eq:lem1}, we can rewrite \rref{equ:systemdef} as
\begin{equation}
\label{equ:8be}
\begin{array}{lll}
(U \otimes I_n)^{-1} \dot{x}(t)&=&A^\dagger(t)(U \otimes I_n)^{-1} x(t)\\[1mm]
&&+B^\dagger(t) (U\otimes I_{d_u})^{-1} u(t)  \\[1mm]
(U \otimes I_{d_y})^{-1} y(t)&=& C^\dagger(t) (U \otimes I_n)^{-1} x(t) \; .
\end{array}
\end{equation}
Then, with  change of variables $x(t) = (U \otimes I_n)\hat{x}(t)$, $u(t) = (U\otimes I_{d_u})\hat{u}(t)$, and $y(t) = (U \otimes I_{d_y})\hat{y}(t)$, it follows from \rref{equ:8be} that
\begin{equation}
\begin{array}{rcl}
\label{equ:10be}
\dot{\hat{x}}(t)&=& A^\dagger(t)\hat{x}(t)+B^\dagger(t)\hat{u}(t) \\ [1mm]
\hat{y}(t)&= & {C}^\dagger(t)\hat{x}(t)\; ,
\end{array}
\end{equation}
where 
\begin{equation*}
\begin{array}{lll}
{A}^\dagger(t) = I_N \otimes \bar{A}^a + \Lambda(t) \otimes \bar{A}^b \\
{B}^\dagger(t) =I_N \otimes \bar{B}^a + \Lambda(t) \otimes \bar{B}^b \\[1mm]
C^\dagger(t) = I_N \otimes \bar{C}^a + \Lambda(t) \otimes \bar{C}^b \\[1mm]
\end{array}
\end{equation*}  
are block diagonal matrices. Therefore, the system \rref{equ:10be} is equivalent to $N$ independent $nth$ order subsystems given in \rref{equ:split}. This concludes the proof.  $\hfill\blacksquare$
\end{proof}         

\begin{proof}[Proof of Theorem~\ref{thm:dectolpv}:]
The proof is straightforward and relies on the  affine dependence of system matrices on $\nu_i$ in \rref{equ:26}. Each $\nu_i \in [\lambda_{1i}, \ \lambda_{2i}]$ can be substituted with a bigger polytope $\rho \in [\min\{\nu_i\}, \ \max\{\nu_i\}]$ for $i = 1,2,\hdots,N$. It is obvious that $\min\{\nu_i\}= \min\{\underline{\lambda}_1, \ \underline{\lambda}_2\}$. Arguing similarly, we have  $\max\{\nu_i\} = \max\{\bar{\lambda}_1, \ \bar{\lambda}_2\}$. With these substitutions, the dependence of system matrices on $\nu_i$ can be dropped and system of $N$ independent  subsystems can be represented by an equivalent LPV system \rref{equ:30}. Since the systems \rref{equ:systemdef} and \rref{equ:26} are equivalent according to Theorem~\ref{thm:Theorem 1}, the LPV framework \rref{equ:30} captures the dynamics of the system \rref{equ:systemdef}.  This completes the proof.   $\hfill\blacksquare$ 
\end{proof} 
\end{document}